\newcommand{\bd}{\partial}
\newcommand{\D}{\mathcal{D}}
\newenvironment{proof}%
{\noindent\emph{Proof}.\hspace{1ex}}%
{\hfill\unitlength=0.18ex%
  \begin{picture}(12,12)
    \put(1,1){\framebox(9,9){}}
    \put(1,4){\framebox(6,6){}}
  \end{picture}\linebreak
}
\newtheorem{theorem}{Theorem}
\newtheorem{lemma}{Lemma}
\newtheorem{fact}{Fact}
\begin{document}

\title{A Note on Minimum-Sum Coverage by Aligned Disks}

\author{
Chan-Su Shin\thanks{School of Electronics and Information Engineering,  Hankuk~University~of~Foreign~Studies, Korea. Email: {\tt cssin@hufs.ac.kr}. Supported by National Research Foundation Grant(NRF2011-002827).}
}

\date{}
\maketitle

\begin{abstract}
  In this paper, we consider a facility location problem to find a minimum-sum coverage of $n$ points by disks centered at a fixed line. The cost of a disk with radius $r$ has a form of a non-decreasing function $f(r)= r^\alpha$ for any $\alpha \geq 1$. The goal is to find a set of disks under $L_p$ metric such that the disks are centered on the $x$-axis, their union covers the points, and the sum of the cost of the disks is minimized. Alt et al.~\cite{alt06} presented an algorithm running in $O(n^4\log n)$ time for any $\alpha > 1$ under any $L_p$ metric. We present a faster algorithm for this problem running in $O(n^2\log n)$ time for any $\alpha > 1$ and any $L_p$ metric.
\end{abstract}

\section{Introduction}

We consider a geometric facility location problem of finding $k$ disks whose union covers a set $P$ of $n$ input points with the minimum cost. A center of the disk of radius $r$ is often modeled as a base station(server) of transmission radius $r$ and the input points as point sensors(clients), so we assume the cost of the disk to be $r^\alpha$ for some real value $\alpha \geq 1$. The goal is to minimize $\sum_i r(D_i)^\alpha$ where the disks $D_i$ covering $P$ have radius $r(D_i)$. Specially, it is equivalent to minimizing the sum of their radii for $\alpha = 1$, and the sum of their areas for $\alpha = 2$.

Alt et al.~\cite{alt06} presented a number of results on several related problems. Among them, we focus on a restricted version in which the centers of the disks are aligned on a fixed line, simply saying $x$-axis. When the fixed line is not given but its orientation is fixed, finding the best line that guarantees the minimum coverage turned out to be quite hard even for $\alpha = 1$, thus they gave a PTAS approximation algorithm~\cite{alt06}.

Alt et al.~\cite{alt06} presented dynamic programming algorithms for this restricted problem of covering $P$ by aligned disks on a fixed line in time $O(n^2\log n)$ for $\alpha = 1$, and in time $O(n^4\log n)$ for any $\alpha > 1$ under any $L_p$ metric for $1\leq p < \infty$. For $L_\infty$ metric, they presented an $O(n^3\log n)$-time algorithm.

We present new dynamic programming algorithms running in $O(n^2\log n)$ time for any $\alpha > 1$ and any $L_p$ metric, and in $O(n^2)$ time for $L_\infty$ metric. Note here that the number of disks in the optimal covering is automatically determined in the algorithm. If one would want to restrict the number of disks used, say as a fixed $1\leq k \leq n$, then we can find at most $k$ disks whose union covers $P$ with minimum cost in a similar way. Actually we can find such $k$ disks for all $1\leq k\leq n$ in $O(n^3\log n)$ time in total.

\begin{theorem}
\label{thm:min}
Given a set $P$ of $n$ points in the plane and a non-decreasing cost function with $\alpha \geq 1$, we can compute an optimal disks centered on the $x$-axis such that their union covers $P$ and the sum of their costs is minimized in $O(n^2\log n)$ time for any fixed $L_p$ metric and in $O(n^2)$ time for $L_\infty$ metric.
\end{theorem}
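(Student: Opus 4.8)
The plan is to sort the points along the $x$-axis, establish an interval-type structural property of some optimal solution, and then run a one-dimensional dynamic program whose transition cost is a line-restricted smallest-enclosing-disk value maintained incrementally. Sort $P=\{p_1,\dots,p_n\}$ so that $p_k=(a_k,b_k)$ with $a_1\le a_2\le\cdots\le a_n$. The first ingredient is the claim that some optimal covering has disks $D_1,\dots,D_m$ which, listed in the order in which their leftmost covered point occurs, satisfy: the block $B_t$ of points covered by $D_t$ but by no $D_{t'}$ with $t'<t$ is a contiguous run $\{p_{a_t+1},\dots,p_{a_{t+1}}\}$, with $0=a_1<a_2<\cdots<a_{m+1}=n$. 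I would prove this by an uncrossing argument: from an arbitrary optimal solution, assign every point (in $x$-order) to a disk that covers it; if some block so produced fails to be an interval — equivalently, if some disk is responsible for two points straddling, in $x$-coordinate, a point handled by a later disk — then replace the two offending disks by two disks that respect the left-to-right order and together cover the same point set at no greater total cost, and iterate until all blocks are intervals.

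Granting this structure, let $\mathrm{cost}(i,j)=f(r_{ij})$ denote the cost of one minimum-radius $x$-axis-centered disk covering $p_i,\dots,p_j$, which is well defined since $f$ is non-decreasing; here $r_{ij}$ is the least $r$ for which some point $(x,0)$ is within $L_p$-distance $r$ of each of $p_i,\dots,p_j$. Writing $\mathrm{OPT}(j)$ for the minimum total cost of $x$-axis-centered disks whose union contains $p_1,\dots,p_j$, the last block of the canonical optimum is a suffix, so
\[
  \mathrm{OPT}(0)=0,\qquad \mathrm{OPT}(j)=\min_{0\le i<j}\bigl(\mathrm{OPT}(i)+\mathrm{cost}(i+1,j)\bigr),
\]
with $\mathrm{OPT}(n)$ the sought optimum; conversely every split yields a feasible covering, so the recurrence is exact. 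Given the table $\{\mathrm{cost}(i,j)\}$, evaluating it takes $O(n^2)$ time, so the task reduces to building that table within the claimed bounds.

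For $L_p$ with $1\le p<\infty$, set $g_k(x)=|a_k-x|^p+|b_k|^p$, so that $r_{ij}^p=\min_x U_{[i,j]}(x)$ where $U_{[i,j]}(x)=\max_{i\le k\le j}g_k(x)$. Each $g_k$ is convex, hence so is $U_{[i,j]}$; moreover, for $a_k<a_l$ the difference $g_k-g_l$ is monotone in $x$, so any two of these functions cross at most once. Consequently $\{g_k\}$ behaves like a family of pseudolines, $U_{[i,j]}$ has $O(n)$ pieces, and each $g_k$ occurs on a single sub-interval. For each fixed $i$ I would sweep $j=i,i+1,\dots,n$, maintaining $U_{[i,j]}$ as a deque of pieces ordered by $x$: since $a_j$ is the largest $x$-coordinate seen so far, $g_j$ is on top as $x\to-\infty$ and, by the single-crossing property, dominated thereafter, so inserting it pops a prefix of the deque, truncates the piece where $g_j$ leaves the envelope, and prepends $g_j$ — amortized $O(1)$ structural work over the whole sweep. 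After each insertion, $\min_x U_{[i,j]}$ is attained either at a breakpoint or at the apex of a single $g_k$, and is located by a convexity-guided binary search on the deque in $O(\log n)$ time; raising it to the power $\alpha/p$ gives $\mathrm{cost}(i,j)$. Thus each row costs $O(n\log n)$, the full table $O(n^2\log n)$, and together with the recurrence we obtain $O(n^2\log n)$ for any fixed $L_p$. For $L_\infty$ a disk is an axis-aligned square, so $g_k(x)=\max(|a_k-x|,|b_k|)$ and $\min_x\max_k g_k(x)=\max\bigl((a_j-a_i)/2,\ \max_{i\le k\le j}|b_k|\bigr)$; both terms update in $O(1)$ as $j$ grows, so the table — and hence the whole algorithm — needs only $O(n^2)$ time.

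The step I expect to be the real obstacle is the structural claim, specifically making the uncrossing exchange precise: the naive remedy of enlarging one disk to swallow the straddling point can blow up its radius, so the replacement must instead re-partition the union of the two blocks at a carefully chosen $x$-threshold and show that the two new radii together do not exceed the two old ones — this is where $\alpha\ge1$ (convexity of $t\mapsto t^\alpha$) would enter. The remaining pieces — the recurrence and the incremental upper-envelope maintenance, including the evaluation of the single crossing point or apex, which for irrational $p$ is the root of a transcendental equation and is charged as a real-RAM primitive — are routine once the structural claim is in place.
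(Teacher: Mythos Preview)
Your overall plan matches the paper's: reduce to an interval DP on the $x$-sorted points, and evaluate $\mathrm{cost}(i,j)$ by incrementally maintaining, for each fixed left endpoint, the one-dimensional farthest-point structure as the right endpoint sweeps. Your upper-envelope-of-$g_k$ formulation is exactly the restriction of the farthest-point Voronoi diagram to the $x$-axis, which is the language the paper uses; your deque with a convexity-guided binary search and the paper's balanced search tree storing the Voronoi intervals with subtree minima are interchangeable implementations, and both give $O(n\log n)$ per row. The $L_\infty$ analysis is likewise the same.

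The one place you diverge is the structural lemma, and here you are making your life harder than necessary. No uncrossing or exchange argument is needed, and convexity of $t\mapsto t^\alpha$ plays no role---only monotonicity of $f$. Take any optimal covering $D_1,\dots,D_m$ with centers in increasing $x$-order. Between consecutive disks drop a vertical line $\ell_i$ (through $\partial D_i\cap\partial D_{i+1}$ if they overlap, anywhere in the gap otherwise); these lines partition $P$ into consecutive runs $P_1,\dots,P_m$ with $P_i\subseteq D_i$. The smallest axis-centered disk $C_i$ enclosing $P_i$ then satisfies $r(C_i)\le r(D_i)$, hence $\sum_i f(r(C_i))\le\sum_i f(r(D_i))$; since $\{C_i\}$ is itself a feasible covering, equality holds. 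That is the entire argument: the optimal disks already hand you the $x$-thresholds, so the ``carefully chosen threshold'' you anticipate needing is simply read off, and there is nothing to uncross.
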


\begin{theorem}
\label{thm:min_k}
Given a set $P$ of $n$ points in the plane and a non-decreasing cost function with $\alpha \geq 1$, we can compute a collection of optimal $k$ disks for $P$ such that $P$ is covered by at most $k$ disks for any $1\leq k \leq n$ and the sum of the costs of the disks is minimized in $O(n^3\log n)$ time for any fixed $L_p$ metric and in $O(n^3)$ time for $L_\infty$ metric.
\end{theorem}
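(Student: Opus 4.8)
The plan is to run the dynamic program behind Theorem~\ref{thm:min} in a \emph{layered} fashion, adding one index that records how many disks have been used. As there, sort $P=(p_1,\dots,p_n)$ by $x$-coordinate. The structural fact that makes the one-index DP correct is that among all coverings of $P$ by disks centered on the $x$-axis there is an optimal one whose disks can be ordered so that the $i$-th disk is responsible for a contiguous block $p_{a_i},\dots,p_{b_i}$ of the sorted sequence, with $a_1=1$, $a_{i+1}=b_i+1$, and $b_m=n$. First I would check that this persists under a cap on the number of disks: the usual derivation of the block structure only deletes redundant disks and locally translates or shrinks disks --- operations that never split a disk and hence never increase the disk count --- so for every $m$ there is a covering by at most $m$ disks that is optimal among all such coverings and has the block structure. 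This verification is the step I expect to be the main obstacle; everything after it is bookkeeping.

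Granting the lemma, write $c(a,b)$ for the minimum cost of a single disk centered on the $x$-axis that covers $p_a,\dots,p_b$ (and $c(i+1,i)=0$), and let $\mathrm{OPT}[i][t]$ be the minimum total cost of covering $p_1,\dots,p_i$ by at most $t$ such disks. Then
\[
  \mathrm{OPT}[i][t]\;=\;\min_{0\le j\le i}\Bigl(\mathrm{OPT}[j][t-1]+c(j+1,i)\Bigr),
\]
with $\mathrm{OPT}[0][t]=0$ for all $t\ge 0$ and $\mathrm{OPT}[i][0]=+\infty$ for $i\ge 1$. Correctness is immediate from the block structure, $\mathrm{OPT}[i][t]$ is non-increasing in $t$, and therefore $\mathrm{OPT}[n][k]$ is the cost of an optimal covering of $P$ by at most $k$ disks; standard back-pointers recover the disks themselves, simultaneously for every $k$.

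For the time bound, the values $c(a,b)$ over all $O(n^2)$ contiguous blocks --- constrained smallest-enclosing-disk costs, the center restricted to the $x$-axis --- can be obtained by the methods of Theorem~\ref{thm:min} in $O(n^2\log n)$ time for a fixed $L_p$ metric and in $O(n^2)$ time for $L_\infty$ (in the latter case the minimum-cost disk covering a block has ``radius'' equal to the larger of the block's maximum $|y|$-coordinate and half its $x$-extent, both maintainable by running extrema). Computing layer $t$ of the table from layer $t-1$ is exactly one run of the Theorem~\ref{thm:min}-style dynamic program with layer $t-1$ supplying the prefix costs, so performing it for the $n$ layers $t=1,\dots,n$ costs a factor $n$ more than Theorem~\ref{thm:min}: $O(n^3\log n)$ time for a fixed $L_p$ metric and $O(n^3)$ time for $L_\infty$, as claimed.
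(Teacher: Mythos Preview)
Your proposal is correct and follows essentially the same approach as the paper: add a second index to the dynamic program recording the number of disks used, observe that the block-structure lemma (the paper's Lemma~\ref{lem:partition}) continues to hold under a cap on the disk count because passing from the $D_i$ to the $C_i$ never increases the number of disks, and then fill the two-dimensional table $A[i][k]$ layer by layer at a cost of one extra factor of $n$ over Theorem~\ref{thm:min}. Your write-up is in fact more careful than the paper's one-paragraph sketch, and your remark that the $O(n^2)$ values $c(a,b)$ can be tabulated once in $O(n^2\log n)$ time (so that each of the $n$ layers then costs only $O(n^2)$) even shaves the $L_p$ bound to $O(n^3)$, though you do not claim this improvement.
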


\section{Geometric properties}

The formal definition of the problem is as follows: Given a set $P = \{p_1, \ldots, p_n\}$ of $n$ points in the plane, a real value $\alpha \geq 1$ and $L_p$ metric for some $p\geq 1$, find optimal disks $D_1,D_2,\ldots, D_k$ centered on the $x$-axis with radii $r(D_i)$ such that their union covers $P$ and the sum of their costs $\sum_{i} r^\alpha(D_i)$ is minimized.

As mentioned in~\cite{alt06}, we assume that all points in $P$ lie above or on the $x$-axis and no two points have the same $x$-coordinates. If a point $p$ is below the $x$-axis, we replace it with a new point $p'$ mirroring $p$ with respect to the $x$-axis, then we still have the same optimal covering. If $p$ is directly above $p'$, then any disk containing $p$ always contains $p'$, so we can simply discard $p'$ from $P$. Finally we assume the points of $P$ are in general position, i.e., no three or more points lie on the boundary of a disk centered on the $x$-axis.

We also notice that the optimal covering is not unique, so we assign the lexicographic order to such optimal coverings, according to $x$-coordinates of their centers. Then we consider only the leftmost optimal covering $\D = \{D_1, \ldots, D_k\}$ with centers in increasing order on the $x$-axis.

For a while, let us consider $L_p$ metric only for $1\leq p <\infty$. Let $\bd R$ denote the boundary of a closed region $R$. We denote by $t_i$ the highest point(or apex) of $\bd D_i$, and by $a_i$ and $b_i$ the left and right intersection points of $\bd D_i$ with the $x$-axis, respectively. Let $B$ be the union of disks in $\D$. Then the following facts hold; the first one is mentioned also in~\cite{alt06}.

\begin{fact}\label{fact:apex}\rm{\cite{alt06}}
For each $1\leq i \leq k$, the apex $t_i$ of $D_i$ appears on $\bd B$.
\end{fact}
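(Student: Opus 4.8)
The plan is to argue by contradiction: if the apex $t_i$ were buried strictly inside some other disk of $\D$, then $D_i$ could be replaced by a strictly cheaper disk whose union with the remaining disks of $\D$ still covers $P$, contradicting the optimality of $\D$.

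Suppose $t_i \notin \bd B$. Since $t_i \in D_i \subseteq B$, this forces $t_i \in \mathrm{int}\,B$, so for a small $\eps>0$ the point $(c_i, r_i+\eps)$ directly above the apex (writing $c_i$ for the $x$-coordinate of the centre of $D_i$ and $r_i=r(D_i)$) still lies in $B$, hence in some $D_j$. As $(c_i,r_i+\eps)\notin D_i$ we get $j\neq i$; and since a disk centred on the $x$-axis meets the vertical line $x=c_i$ in a segment symmetric about the $x$-axis, $D_j$ must contain the lower point $t_i=(c_i,r_i)$ in its interior, i.e.\ $|c_i-c_j|^p+r_i^p<r_j^p$. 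In particular $r_j>r_i$, and by the same symmetry the mirror point $(c_i,-r_i)$ also lies in $\mathrm{int}\,D_j$. Reflecting if necessary we may assume $c_i<c_j$; if $c_i=c_j$ then $D_i\subsetneq D_j$, so $D_i$ is redundant and may be deleted, already contradicting optimality.

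The main step is to show that $D_i\setminus D_j$ is contained in an $L_p$ disk centred on the $x$-axis of radius strictly less than $r_i$. For $(x,y)\in D_i$ one has $|y|^p\le r_i^p-|x-c_i|^p$, so $(x,y)\in D_j$ whenever $h(x):=|x-c_j|^p-|x-c_i|^p\le r_j^p-r_i^p$. Since $c_i<c_j$, the function $h$ is nonincreasing on $[c_i-r_i,\,c_i+r_i]$ (strictly decreasing when $p>1$), with $h(c_i)=|c_i-c_j|^p<r_j^p-r_i^p$; moreover if $h$ did not exceed $r_j^p-r_i^p$ at the left endpoint $c_i-r_i$ then $D_i\subseteq D_j$, again contradicting optimality, so there is $x_0\in(c_i-r_i,\,c_i)$ with $D_i\cap\{x>x_0\}\subseteq\mathrm{int}\,D_j$, whence $D_i\setminus D_j\subseteq D_i\cap\{x\le x_0\}$. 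Finally, superadditivity of $t\mapsto t^p$ for $p\ge 1$ gives, for every $x\le x_0<c_i$,
\[ (x_0-x)^p+(c_i-x_0)^p \le \big((c_i-x_0)+(x_0-x)\big)^p = (c_i-x)^p, \]
and this is exactly the assertion that $D_i\cap\{x\le x_0\}$ is contained in the $L_p$ disk centred at $(x_0,0)$ of radius $(r_i^p-(c_i-x_0)^p)^{1/p}<r_i$. (The degenerate case $r_i=0$, where $D_i$ is a single point of $P$ that must then already be covered by another disk, is handled directly.)

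Replacing $D_i$ by this smaller disk keeps $P$ covered — each point of $P$ in $D_i$ lies in $D_j$ or in $D_i\setminus D_j$, hence remains covered — while strictly decreasing $\sum_i r^\alpha(D_i)$, since $r^\alpha$ is strictly increasing in $r$; this contradicts optimality and proves $t_i\in\bd B$. I expect the localization claim $D_i\setminus D_j\subseteq D_i\cap\{x\le x_0\}$ to be the delicate point: one must check that the part of $D_i$ lying beyond the centre of $D_j$ is genuinely absorbed by $D_j$, and it is precisely the monotonicity of $h$ (together with the symmetric-segment observation of the second paragraph) that makes this go through. Note that for $p=1$ the argument is even easier, since there $t_i\in\mathrm{int}\,D_j$ already forces $D_i\subseteq D_j$.
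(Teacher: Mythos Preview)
The paper does not prove Fact~\ref{fact:apex} itself; it is simply quoted from Alt et al.~\cite{alt06}, so there is no in-paper argument to compare against. Your proof is correct and self-contained: the monotonicity of $h(x)=|x-c_j|^p-|x-c_i|^p$ on $[c_i-r_i,c_i+r_i]$ locates $D_i\setminus D_j$ entirely in the slab $\{x\leq x_0\}$, and the superadditivity inequality $a^p+b^p\leq (a+b)^p$ for $p\geq 1$ then shows that this portion of $D_i$ fits inside the axis-centred $L_p$ disk of radius $(r_i^p-(c_i-x_0)^p)^{1/p}<r_i$ about $(x_0,0)$, yielding the required contradiction with optimality. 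The boundary cases you single out ($c_i=c_j$, $p=1$, and $r_i=0$) are each disposed of correctly.
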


Let us consider $\bd D_i\cap \bd B$, i.e., the circular arc of $\bd D_i$ which appears on $\bd B$. By Fact~\ref{fact:apex}, $t_i$ must be contained on the arc, so the arc is divided into the left and right subarcs at $t_i$. Then we have the following fact.

\begin{fact}\label{fact:partition}
For each $1\leq i \leq k$, $\bd D_i\cap \bd B$ must contain either one point of $P$ at the apex $t_i$ or two points of $P$, one on the left subarc and the other on the right subarc of $\bd D_i \cap \bd B$.
\end{fact}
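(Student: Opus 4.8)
The plan is to argue by contradiction from the optimality of $\D$: if for some $i$ the exposed arc $A_i := \bd D_i\cap\bd B$ failed the stated dichotomy, I would perturb the single disk $D_i$ so that $\bigcup_j D_j$ still covers $P$ while the total cost strictly drops. First I would record the only consequence of the general position assumption that is needed: since $A_i\subseteq\bd D_i$ and $\bd D_i$ bounds a disk centered on the $x$-axis, $A_i$ contains at most two points of $P$. Writing $c_i$ for the $x$-coordinate of the center of $D_i$, split $A_i$ at $t_i$ (which lies on $A_i$ by Fact~\ref{fact:apex}) into its left part $\{q\in A_i : q_x<c_i\}$ and right part $\{q\in A_i : q_x>c_i\}$; note that a point of $P$ on $\bd D_i$ with $x$-coordinate $c_i$ can only be $t_i$. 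Then the statement fails precisely in one of three cases: (i) $A_i\cap P=\emptyset$; (ii) $A_i\cap P$ is a single point lying strictly in the left (resp.\ right) part; (iii) $A_i\cap P$ consists of two points, both lying strictly in the left (resp.\ right) part.

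The key structural step, which I would establish first, is that the coverage of $P$ that a perturbation of $D_i$ alone can threaten involves only the at most two points of $A_i\cap P$. Let $Q_i := \{q\in P : q\in D_i,\ q\notin\bigcup_{j\ne i}D_j\}$ be the points served only by $D_i$. Every point of $P\setminus Q_i$ lies in some $D_j$ with $j\ne i$ and hence stays covered however $D_i$ is modified. Moreover, if $q\in Q_i$ lies on $\bd D_i$ then some neighbourhood of $q$ meets no disk other than $D_i$, so $q\in\bd B$ and therefore $q\in A_i$; thus $Q_i\setminus(A_i\cap P)\subseteq\mathrm{int}(D_i)$. Consequently it suffices to exhibit a one-parameter family of disks $D_i(s)$, $s\in[0,s_0]$, with $D_i(0)=D_i$, with radius strictly decreasing in $s$, and with $A_i\cap P\subseteq D_i(s)$ for all $s$: by continuity, a sufficiently small $s_0$ then also keeps the finitely many points of $Q_i\cap\mathrm{int}(D_i)$ inside $D_i(s_0)$, so replacing $D_i$ by $D_i(s_0)$ yields a covering of $P$ of strictly smaller cost, contradicting the optimality of $\D$.

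It remains to construct such a family in each bad case. In case (i) take $D_i(s)$ with the same center as $D_i$ and radius $r(D_i)-s$; then the condition $A_i\cap P\subseteq D_i(s)$ is vacuous. In cases (ii) and (iii) I would use that, for $1\le p<\infty$, the $L_p$ distance from $(c,0)$ to a fixed point is strictly decreasing in $c$ as long as $c$ stays to the left of that point's $x$-coordinate. Hence, when the one or two pinned points of $A_i\cap P$ lie strictly in the right part (so all have $x$-coordinate $>c_i$), letting $D_i(s)$ be the smallest disk centered at $(c_i+s,0)$ containing those points gives a disk whose radius equals $r(D_i)$ at $s=0$ and strictly decreases for small $s>0$, with the pinned points inside by construction; if the pinned points lie strictly in the left part, slide the center left symmetrically. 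In every case $D_i(s)$ with $s>0$ has radius strictly below $r(D_i)>0$, and since $r\mapsto r^\alpha$ is strictly increasing on $(0,\infty)$ for $\alpha\ge 1$, this is a genuine decrease of the total cost.

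The step I expect to be the main obstacle is the structural claim about $Q_i$: one must argue carefully about the local topology of $B=\bigcup_j D_j$ around a point of $P$ on $\bd D_i$ (that a point served only by $D_i$ and lying on $\bd D_i$ must appear on $\bd B$), and check that it remains correct even when $A_i=\bd D_i\cap\bd B$ is disconnected, i.e.\ when several disks intrude into $D_i$ at different places. Once that is in place, the remainder is a routine continuity and strict-monotonicity computation.
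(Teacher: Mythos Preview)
Your argument is correct and follows the same strategy as the paper's proof---perturb $D_i$ (shrink in place, or slide the center toward the occupied side) to contradict optimality---but you supply details the paper leaves implicit, notably the structural claim about $Q_i$, the explicit treatment of case~(iii), and the continuity argument for interior points. The step you flag as the main obstacle is in fact routine (each closed $D_j$ with $j\ne i$ misses $q\in Q_i$, hence a small ball around $q$ avoids them all, so $q\in\bd D_i$ forces $q\in\bd B$), and the paper's two-line proof simply takes it for granted.
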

\begin{proof}
It is obvious that there must be at least one point of $P$ on $\bd D_i\cap \bd B$. Otherwise we can shrink $D_i$ to get a smaller cost until $\bd D_i$ contains some point. Also if one of the left and right arcs has no points, then we can shrink $D_i$ while keeping the point on the one subarc until some point lies either on the apex $t_i$ or on the other subarc containing no points. This contradicts to the optimality.
\end{proof}

Furthermore, we can also prove the following fact by a similar argument.
\begin{fact}\label{fact:noptsatintersect}
For any $1\leq i< k$, there is no point of $P$ which lies at the intersection $\bd D_i \cap \bd D_{i+1}$ if the intersection exists.
\end{fact}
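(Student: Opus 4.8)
The plan is a proof by contradiction: use optimality to pin down $D_i$ and $D_{i+1}$, show the intersection point lies on $\bd B$, and then invoke Fact~\ref{fact:partition} to get too many points of $P$ on $\bd D_i$. So suppose some $p\in P$ lies on $\bd D_i\cap\bd D_{i+1}$ for some $1\le i<k$; then $p$ lies in both closed disks and hence is covered by each. For each $\ell$, let $P_\ell\subseteq P$ be the points covered by $D_\ell$ and by no other disk of $\D$. Since $\D$ is optimal, $D_\ell$ must have the smallest radius among disks centered on the $x$-axis that contain $P_\ell$ (otherwise replace $D_\ell$ by such a smaller disk: every point of $P$ stays covered and the total cost strictly drops), and in particular $P_\ell\ne\emptyset$. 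Apply this to $D_i$. Since $p$ is covered by $D_{i+1}$ we have $p\notin P_i$, yet $p\in\bd D_i$; by general position at most two points of $P$ lie on $\bd D_i$, and $p$ is one of them, so at most one point of $P_i$ lies on $\bd D_i$; and at least one must (else all of $P_i$ is strictly inside $D_i$, which could then be shrunk). So exactly one point $u\in P_i$ lies on $\bd D_i$, and because $D_i$ has minimum radius among $x$-axis-centered disks containing $P_i$ while only $u$ is on its boundary, its center must sit directly below $u$ (otherwise move the center toward $u$ and shrink), i.e.\ $u=t_i$. Hence $t_i\in P$ and is covered only by $D_i$; symmetrically $t_{i+1}\in P$ is covered only by $D_{i+1}$. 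In particular $p\ne t_i$ and $p\ne t_{i+1}$, since $p$ is covered by two disks.

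Next I would show $p\in\bd B$. The apexes $t_i,t_{i+1}$ lie directly above $c_i,c_{i+1}$ respectively, and the centers increase along the $x$-axis, so $c_i$ is to the left of $c_{i+1}$. The curves $\bd D_i$ and $\bd D_{i+1}$, being symmetric about the $x$-axis with centers on it, meet in a pair of points that are mirror images and share a common $x$-coordinate, so $x(p)$ is that coordinate. A short monotonicity estimate for $L_p$-distance from $\bd D_i$ to $c_{i+1}$ and from $\bd D_{i+1}$ to $c_i$, using Fact~\ref{fact:apex} ($t_i\notin\mathrm{int}(D_{i+1})$, $t_{i+1}\notin\mathrm{int}(D_i)$), then gives $x(c_i)<x(p)<x(c_{i+1})$. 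Since $p$ lies on $\bd D_i$ to the right of $c_i$, it is at least as far from every center lying to the left of $c_i$ as $t_i$ is; combined with Fact~\ref{fact:apex} ($t_i\notin\mathrm{int}(D_j)$) this gives $p\notin\mathrm{int}(D_j)$ for all $j<i$, and symmetrically $p\notin\mathrm{int}(D_j)$ for all $j>i+1$, while $p\in\bd D_i\cap\bd D_{i+1}$ excludes $D_i$ and $D_{i+1}$. Thus $p$ lies in no open disk; and since every disk is centered on the $x$-axis, strictly below $p$ (which has $y(p)>0$), moving straight up from $p$ leaves every disk, so every point slightly above $p$ is outside $B$. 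Hence $p\in\bd B$.

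It then remains to conclude: $p$ and $t_i$ are two distinct points of $P$ on $\bd D_i\cap\bd B$ (for $t_i$ use Fact~\ref{fact:apex}), with $t_i$ at the apex and $p$ strictly on one of the two subarcs into which $t_i$ splits $\bd D_i\cap\bd B$; by Fact~\ref{fact:partition} there must then also be a point of $P$ on the other subarc, producing a third point of $P$ on $\bd D_i$ and contradicting general position. The main obstacle is the middle step — showing $p\in\bd B$, i.e.\ ruling out that $p$ hides in the interior of some other disk $D_j$ — which is where the position estimate $x(c_i)<x(p)<x(c_{i+1})$ and the distance-monotonicity argument (both resting on Fact~\ref{fact:apex}) have to be carried out carefully; once $p\in\bd B$ is in hand the rest is immediate from Facts~\ref{fact:apex} and~\ref{fact:partition}.
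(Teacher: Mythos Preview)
Your outline is correct. The paper's own proof is literally the single clause ``by a similar argument,'' pointing back to the shrinking argument of Fact~\ref{fact:partition}; no details are given. What the paper presumably has in mind is the direct move ``$p$ is covered by both disks, so shrink one of them,'' but as you correctly identify, this can stall when $D_i$ (and symmetrically $D_{i+1}$) is already the minimum-radius disk for its exclusively covered points with a single point of $P$ pinned at the apex. Your proposal handles precisely this case: you first force $t_i,t_{i+1}\in P$ via optimality and general position, then establish $p\in\bd B$ by an $L_p$-distance monotonicity comparison against $t_i$ and $t_{i+1}$ (using Fact~\ref{fact:apex} to bound distances to the other centers), and finally feed $t_i$ and $p$ into Fact~\ref{fact:partition} to manufacture a third boundary point, contradicting general position. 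The monotonicity step you flag as the main obstacle does go through: with centers ordered on the $x$-axis and $x(c_i)<x(p)$, the superadditivity $(a+b)^\rho\ge a^\rho+b^\rho$ for $\rho\ge 1$ gives $|p-c_j|\ge|t_i-c_j|\ge r_j$ for every $j<i$, and symmetrically on the right. One small point: you assume $y(p)>0$ when arguing that moving straight up from $p$ exits $B$; the argument still works when $y(p)=0$ since moving up from the $x$-axis strictly increases the $L_p$-distance to every axis center. In short, your proof is a genuine completion of what the paper leaves as an exercise, and is more explicit than the paper's treatment rather than different in spirit.
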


For each $1\leq i < k$, we define $\ell_i$ as a vertical line between $D_i$ and $D_{i+1}$; if $D_i$ intersects $D_{i+1}$, then $\ell_i$ is a vertical line through intersections $\bd D_i\cap \bd D_{i+1}$, otherwise $\ell_i$ is an arbitrary vertical line in between $b_i$ and $a_{i+1}$. For convenience, we define $\ell_0$ as a vertical line in the left of $a_1$ and $\ell_{k}$ as a vertical line in the right of $b_k$. Then Fact~\ref{fact:noptsatintersect} implies that no points of $P$ lie on $\ell_i$ for any $0\leq i\leq k$.

Let $P_i\subseteq P$ be the points of $P$ lying between $\ell_{i-1}$ and $\ell_i$ for $1\leq i\leq k$. Then we know that $P_i$ contains at least one point by Fact~\ref{fact:partition}, and they are pairwise disjoint and their union is the same as the whole set $P$. Let $C_i$ be the smallest axis-centered disk containing $P_i$. Clearly $\{C_1,\ldots, C_k\}$ is a covering for $P$. We have the following lemma.

\begin{lemma}\label{lem:partition}
$\sum_{1\leq i\leq k} r^\alpha(C_i) = \sum_{1\leq i\leq k} r^\alpha(D_i)$.
\end{lemma}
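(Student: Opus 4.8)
The plan is to prove the two inequalities $\sum_{i} r^\alpha(C_i)\le \sum_{i} r^\alpha(D_i)$ and $\sum_{i} r^\alpha(D_i)\le \sum_{i} r^\alpha(C_i)$ separately. The second one is immediate from optimality: the sets $P_1,\dots,P_k$ are nonempty, pairwise disjoint and cover $P$, and $P_i\subseteq C_i$, so $\{C_1,\dots,C_k\}$ is a covering of $P$ by $k$ axis-centered disks; since $\D$ has minimum total cost, $\sum_i r^\alpha(D_i)\le\sum_i r^\alpha(C_i)$. For the first inequality it suffices to establish the purely geometric claim that $P_i\subseteq D_i$ for every $i$: once that is known, $D_i$ is an axis-centered disk containing $P_i$, so $r(C_i)\le r(D_i)$ by minimality of $C_i$, and because $r^\alpha$ is non-decreasing we get $r^\alpha(C_i)\le r^\alpha(D_i)$ for each $i$ and then sum.

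So the whole argument reduces to showing $P_i\subseteq D_i$. First I would record a one-variable comparison for two consecutive disks. Write $c_j$ for the center of $D_j$, so $c_1<\dots<c_k$, and for a point $(x,y)$ on $\bd D_j$ above the axis put $\phi_j(x)=r^2(D_j)-(x-c_j)^2\,(=y^2)$. The function $\phi_{j+1}(x)-\phi_j(x)=2(c_{j+1}-c_j)x+\text{const}$ is strictly increasing in $x$ and, when $\bd D_j\cap\bd D_{j+1}\neq\emptyset$, vanishes exactly at the $x$-coordinate of that intersection, i.e.\ at $x(\ell_j)$. Consequently $D_j\cap\{x\ge x(\ell_j)\}\subseteq D_{j+1}$ and, symmetrically, $D_{j+1}\cap\{x\le x(\ell_j)\}\subseteq D_j$; both statements hold trivially when $D_j\cap D_{j+1}=\emptyset$, since then every point of $D_j$ has $x$-coordinate $\le b_j<x(\ell_j)$ and every point of $D_{j+1}$ has $x$-coordinate $\ge a_{j+1}>x(\ell_j)$. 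I would also use that the lines $\ell_0,\ell_1,\dots,\ell_k$ occur in this left-to-right order — which is forced by the fact (already noted before the lemma) that the slabs $P_1,\dots,P_k$ are nonempty and partition $P$ from left to right, so $x(\ell_0)<x(\ell_1)<\dots<x(\ell_k)$.

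Now take any $p\in P_i$. Since $\D$ covers $P$, we have $p\in D_j$ for some $j$; if $j=i$ we are done. If $j<i$, then $x(p)>x(\ell_{i-1})\ge x(\ell_m)$ for all $m$ with $j\le m\le i-1$, so the first domination step, applied successively at $\ell_j,\ell_{j+1},\dots,\ell_{i-1}$, pushes $p$ into $D_{j+1},D_{j+2},\dots$, and finally into $D_i$. The case $j>i$ is symmetric, using the second domination step together with $x(p)<x(\ell_i)\le x(\ell_m)$ for $i\le m\le j-1$. In every case $p\in D_i$, which proves $P_i\subseteq D_i$ and hence the lemma.

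I expect the main obstacle to be the geometric domination step together with its propagation: one must check that the separating lines $\ell_j$ are genuinely sorted (so that $x(p)>x(\ell_{i-1})$ really does give $x(p)>x(\ell_m)$ for every $m\le i-1$) and must treat with care the degenerate case in which two consecutive disks are disjoint, where the relevant half of $D_j$ does not reach across $\ell_j$ and the step is vacuous. The remaining ingredients — the comparison of the two quadratics $\phi_j$ and $\phi_{j+1}$, the reduction of the first inequality to $P_i\subseteq D_i$ via minimality of $C_i$ and monotonicity of $r^\alpha$, and the appeal to optimality of $\D$ for the second inequality — are routine.
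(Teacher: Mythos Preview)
Your approach is essentially the paper's: both directions come from optimality of $\D$ on one side and from $P_i\subseteq D_i$ (hence $r(C_i)\le r(D_i)$, hence $r^\alpha(C_i)\le r^\alpha(D_i)$) on the other. The paper simply asserts $P_i\subseteq P\cap D_i$ ``by the definition of $P_i$'' without further argument, whereas you actually supply the domination-and-propagation proof of this containment; in that sense your write-up is more complete, not different.

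One small caveat: your comparison $\phi_j(x)=r^2(D_j)-(x-c_j)^2$ and the linearity of $\phi_{j+1}-\phi_j$ are specific to $L_2$, while the lemma (and the paper) is stated for all $L_p$, $1\le p<\infty$. The fix is immediate: set $\psi_j(x)=r(D_j)^p-|x-c_j|^p$, so that on $\bd D_j$ above the axis $|y|^p=\psi_j(x)$, and note that $\psi_{j+1}(x)-\psi_j(x)=|x-c_j|^p-|x-c_{j+1}|^p+\text{const}$ is still strictly increasing in $x$ (because $c_j<c_{j+1}$). Your domination step and the subsequent propagation then go through verbatim for every $L_p$.
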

\begin{proof}
Since $\{D_1,\ldots, D_k\}$ is the optimal covering for $P$, it holds that $\sum_i r^\alpha(D_i) \leq \sum_i r^\alpha(C_i)$. By the definition of $P_i$, we see that $P_i\subseteq P\cap D_i$ for each $1\leq i < k$. thus $r(C_i) \leq r(D_i)$. Since $f(r) = r^\alpha$ is a nondecreasing function for $\alpha \geq 1$, $\sum_i r^\alpha(C_i) \leq \sum_i r^\alpha(D_i)$, which completes the lemma.
\end{proof}

The above lemma means that there is a partition of $P$ into $P_i$'s separated by vertical lines such that the set of the smallest disks containing $P_i$'s is an optimal covering for $P$. Using this lemma, we can derive a fast dynamic programming algorithm.

\section{Dynamic programming algorithm}

Alt et al.~\cite{alt06} defined a \emph{pinned} disk(or circle) as the leftmost smallest axis-centered disks enclosing some fixed subset of points of $P$, so the pinned disk contains at least one point on its boundary. The disk $C_i$ defined in Lemma~\ref{lem:partition} is a pinned disk. It is obvious that the optimal covering $\D$ is a subset of such pinned disks.

In \cite{alt06}, the dynamic programming algorithm chooses pinned disks with minimum cost from all $O(n^2)$ pre-computed pinned disks by testing the coverage condition for possible pairs of pinned disks that no other points of $P$ lie outside the chosen pair. Investigating such pairs requires at least $\Omega(n^4)$ time. But Lemma~\ref{lem:partition} tells us there must be a partition $P_1, \ldots, P_k$, separated by vertical lines, such that a set of the smallest disks $C_i$ enclosing $P_i$ is indeed an optimal covering for $P$. Thus we simply go through the input points in some linear order, not through the pinned disks, and compute $C_i$ by implicitly using the farthest Voronoi diagram to check the coverage condition.

Let $A$ be an array in which $A[i]$ stores the minimum cost for a subset $\{p_i, p_{i+1}, \ldots, p_n\}$. The minimum cost for the whole set $\{p_1,\ldots, p_n\}$ will be stored at $A[1]$. If we denote by $D(\{p_i,\ldots,p_j\})$ the smallest disk enclosing $\{p_i,\ldots,p_j\}$, then we have the following recurrence relation:

\begin{equation*}
A[i]=\begin{cases}
0 & \text{if $i > n$},\\
\min_{i\leq j\leq n} \{r^\alpha(D(\{p_i,\ldots,p_j\})) + A[j+1] \} & \text{if $1\leq i\leq n$}.
\end{cases}
\end{equation*}

The key step is to compute $D(\{p_i,\ldots,p_j\})$ efficiently. We can do this in amortized $O(\log n)$ time by maintaining the intersection of the $x$-axis with the farthest Voronoi diagram in a dynamic way. For a fixed $i$, $A[i]$ is computed in $O(n\log n)$ time, so the total time to compute $A[1]$ becomes $O(n^2\log n)$.

\begin{figure}
\begin{center}
    \includegraphics[width=10cm]{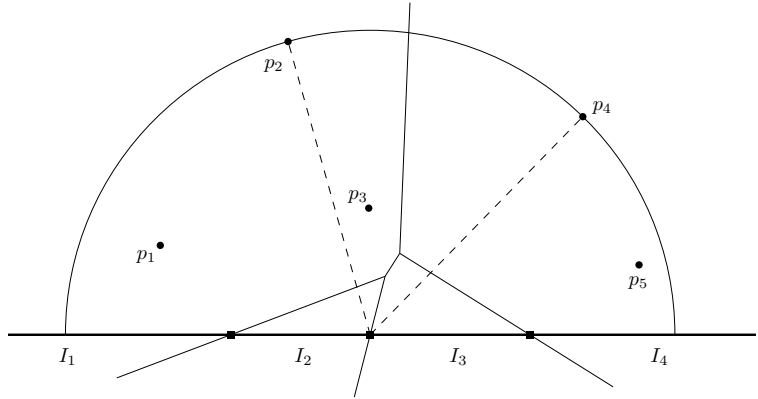}
\end{center}
\caption{The farthest Voronoi diagram of $P$ partitions the $x$-axis into intervals. Here we consider the $L_2$ metric. $p(I_1) = p_5$, $p(I_2)=p_4$, $p(I_3)=p_2$, $p(I_4) = p_1$, and $p_3$ has no its interval on $x$-axis.}
\label{fig:fvd}
\end{figure}

Let us focus on computing $A[i]$ for fixed $1\leq i\leq n$ in what follows.

As in Figure~\ref{fig:fvd}, the intersection of the farthest Voronoi diagram for $\{p_i, \ldots, p_j\}$ with the $x$-axis partitions the $x$-axis into intervals $I_1, I_2, \ldots, I_{\sigma(i,j)}$ from the left to the right, where $I_l$ is a half-open interval $I_l := [x_{l-1}, x_{l})$, where $x_0 = -\infty$ and $x_{\sigma(i,j)} = +\infty$. Each interval $I_l$ is a collection of the points from which the farthest point of $\{p_i,\ldots, p_j\}$ is the same. We denote by $p(I_l)$ the farthest point from any $x\in I_l$. Then a disk centered at some point $x\in I_l$ and with radius $|xp(I_l)|$ encloses all the points of $\{p_i,\ldots, p_j\}$.

Let $D(I_l)$ be the smallest disk enclosing $\{p_i,\ldots,p_j\}$ whose center lies in $I_l$. We have two cases. For a case that $\bd D(I_l)$ has one point at its apex, the point is indeed $p(I_l)$ and the center of $D(I_l)$ has the same $x$-coordinate as that of $p(I_l)$. For the other case, $\bd D(I_l)$ should have two points, i.e., $p(I_{l-1})$ and $p(I_l)$, so the center of $D(I_l)$ must be on $x_{l-1}$, i.e., the left endpoint of $I_l$. In either case, we can compute $D(I_l)$ and its radius in constant time, provided that the interval set is given.

We store the intervals in any balanced search tree $T_i$ which is able to support the insertion and deletion in $O(\log n)$ time; refer to~\cite{compgeom}. Each leaf node of $T_i$ keeps an interval $I_l$ together with $D(I_l)$ and $r(D(I_l))$ from left to right. Each internal node of $T_i$ stores the minimum one among the radii in the leaves of its subtree. Then the radius stored at the root of $T_i$ is the radius of the smallest disk enclosing $\{p_i, \ldots, p_j\}$. We can also update the smallest radius in $O(\log n)$ time whenever an interval is inserted or deleted.

For a fixed $i$, we need to evaluate $r(D\{p_i,\ldots,p_j\})$ for all $i\leq j \leq n$ to determine $A[i]$. We do this by maintaining a set of intervals $I_1,\ldots, I_{\sigma(i,j)}$ for $i\leq j \leq n$ incrementally from $j = i$ to $j = n$. For $j = i$, $T_i$ consists of one interval $(-\infty, +\infty)$ for $p_i$. We then update $T_i$ by adding $p_j$ one by one from $j = i+1,\ldots, n$. We now explain how we update $T_i$ for $\{p_{i},\ldots, p_{j-1}\}$ when $p_j$ is inserted.

We know that the interval $J=[a, b)$ for $p_j$ must appear as the new leftmost interval because $p_j$ is the rightmost point among $p_{i},\ldots, p_j$. Thus, $a = -\infty$. When $J$ is inserted into $T_i$, several consecutive intervals from the leftmost, which are completely contained in $J$, should be removed from $T_i$ or some interval which is partially overlapped with $J$ should be replaced with a shorter one. To identify such intervals efficiently, we need the following properties on the interval set.

\begin{lemma}\label{lem:interval} For $\{p_i, \ldots, p_j\}$ under any $L_p$ metric, $1\leq p < \infty$, the interval set $I_1, \ldots, I_{\sigma(i,j)}$ has two properties: (1) if a point of $\{p_i, \ldots, p_j\}$ has its interval in the interval set, then the interval is connected, and (2) $p_x(I_1)> p_x(I_2) >\cdots >p_x(I_{\sigma(i,j)})$, where $p_x(I)$ is the $x$-coordinate of the point $p(I)$.
\end{lemma}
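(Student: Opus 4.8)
The plan is to prove both properties from the defining characterization of the farthest-point Voronoi diagram: a point $x$ on the $x$-axis lies in the region of $p(I)$ precisely when $p(I)$ is a farthest point of $\{p_i,\dots,p_j\}$ from $x$, i.e.\ $|x\,p(I)| = \max_{i\le m\le j} |x\,p_m|$. I would first record that for two fixed points $p,q$, the function $\delta(x) := |x\,p|_p - |x\,q|_p$ (distance from the moving center $x=(x,0)$ on the axis) is monotone in a useful direction; more precisely, since $p,q$ lie on or above the axis and $L_p$ distance to a fixed point is a convex function whose restriction to the horizontal line is unimodal with minimum directly below the point, the bisector of $p$ and $q$ meets the $x$-axis in at most one point, and on one side $p$ is the farther of the two, on the other side $q$ is. This is the metric fact the whole lemma rests on, and verifying it for general $L_p$ (rather than just $L_2$) is the step I expect to be the main obstacle, since one cannot appeal to the classical ``bisectors are lines'' picture; instead I would argue directly from convexity/unimodality of $x\mapsto |x\,p|_p$ along a horizontal line and the location of its minimizer.

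For property (2), I would argue by contradiction: suppose two adjacent intervals $I_l = [x_{l-1},x_l)$ and $I_{l+1}=[x_l,x_{l+1})$ have $p_x(I_l) \le p_x(I_{l+1})$. At the shared endpoint $x_l$ both $p(I_l)$ and $p(I_{l+1})$ are farthest points (the farthest-point function is continuous), so they are equidistant from $x_l$ and tie for the maximum. Writing $p = p(I_l)$ (to the left of or directly above the tie point, in $x$-coordinate) and $q = p(I_{l+1})$ (to the right of or directly above), I would use the monotonicity fact from the previous paragraph: as $x$ decreases from $x_l$, the point on the ``$x$-smaller'' side becomes strictly farther, so moving left from $x_l$ the farthest point should be the one with smaller $x$-coordinate — contradicting that $p$, which has $x$-coordinate $\le$ that of $q$, is the farthest point on $I_l$ which lies to the \emph{left}. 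Pinning down the degenerate sub-case $p_x(I_l) = p_x(I_{l+1})$ needs the general-position assumption (no two points share an $x$-coordinate), which forces $p = q$ and hence the intervals are not distinct — a contradiction. So the inequality in (2) is strict.

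Property (1) — that each point's region on the axis is a single interval, not a union of several — then follows from (2) almost immediately. If $p_m$ owned two disjoint intervals $I_l$ and $I_{l'}$ with $l < l'$, then by (2) applied along the chain $I_l, I_{l+1},\dots,I_{l'}$ we would get $p_x(I_l) > p_x(I_{l+1}) > \cdots > p_x(I_{l'})$, but $p(I_l) = p(I_{l'}) = p_m$ forces $p_x(I_l) = p_x(I_{l'})$, a contradiction. Hence the set of intervals assigned to any one point is contiguous, i.e.\ connected. I would present (1) as a corollary of (2) in this way rather than proving it independently.

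To summarize the order of execution: (i) state and prove the metric monotonicity lemma — for $p,q$ on or above the axis, along the $x$-axis the farther of the two under $L_p$ is a monotone ``left/right'' function with a single switch point, using convexity and unimodality of $x \mapsto |x\,p|_p$; (ii) deduce property (2) by the contradiction argument at a shared endpoint, handling the equal-$x$-coordinate degeneracy via general position to get strict inequality; (iii) deduce property (1) from (2) by the telescoping-contradiction argument above. The only genuinely delicate point is step (i) in full $L_p$ generality; everything after it is combinatorial bookkeeping on the ordered interval sequence.
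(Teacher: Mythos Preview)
Your overall strategy is sound, but the contradiction argument for property~(2) has the direction reversed, and as written it does not produce a contradiction. You claim that ``as $x$ decreases from $x_l$, the point on the `$x$-smaller' side becomes strictly farther,'' and hence that moving left the farthest of the two should be the one with smaller $x$-coordinate. The opposite is true: as the center slides left along the axis, it is the point with the \emph{larger} $x$-coordinate that becomes the farther of the two (think of $x\to -\infty$, where the farthest point of any set is the rightmost one; or differentiate $|x-p_x|^\rho - |x-q_x|^\rho$). Under your hypothesis $p_x(I_l)\le p_x(I_{l+1})$, moving left from $x_l$ into $I_l$ therefore makes $q=p(I_{l+1})$ (the one with larger $x$-coordinate) farther than $p=p(I_l)$, contradicting that $p$ is farthest on $I_l$. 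That is the correct contradiction; your stated conclusion (``the one with smaller $x$-coordinate is farthest to the left'') is \emph{consistent} with $p$ being farthest on $I_l$, so nothing breaks. Once the direction is fixed, your argument for~(2) works, and your derivation of~(1) from~(2) is clean.

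For comparison, the paper argues in the opposite order and more geometrically. Property~(1) is read off directly from the fact (cited from Lee's work on $L_p$ Voronoi diagrams) that the bisector of two points is monotone in both axes and hence meets the $x$-axis exactly once; connectedness of each point's interval is immediate. For property~(2) the paper does not use an analytic monotonicity-of-distance argument: instead it looks at the upper half of the $L_p$-circle centered at the shared endpoint $I\cap I'$, on which both $p(I)$ and $p(I')$ lie; the bisector of these two points passes through the center and cuts this half-circle exactly once, and the side of the cut each point falls on, together with $x$-monotonicity of the half-circle, forces $p_x(I)>p_x(I')$. Your step~(i) is essentially the same bisector fact the paper cites rather than proves, so the substantive difference is the half-circle picture for~(2) versus your sign-of-difference argument, and whether~(1) is proved independently or deduced from~(2).
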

\begin{proof}
A bisector of two points in $\{p_i,\ldots, p_j\}$ under any $L_p$ metric intersects the $x$-axis because no two points have the same $x$-coordinates. The bisector is monotone to the $x$-axis and the $y$-axis~\cite{dtlee80}, so it intersects the $x$-axis exactly once. This fact immediately implies that an interval for any point is either empty or connected.
%
%To prove the connectedness, we suppose that $p_j$ has two disjoint intervals $I$ and $L$, where $I$ is to the left of $L$. There must be one or more intervals between them, denote by $J$ the interval to the right of $I$ and by $K$ the interval to the left of $L$. Note that $J$ is not necessarily different with $K$. Let $D$ be a smallest disk centered at $I\cap J$, i.e., the common endpoint of $I$ and $J$ which encloses all points in $\{p_i,\ldots, p_j\}$. Then $p(I)$ and $p(J)$ lie on $\bd D$. Similarly, let $D'$ be a smallest disk centered at $K\cap L$ enclosing all the points. Since $p(I) = p(L) = p_j$, they must be on one of two intersections $\bd D \cap \bd D'$, clearly the one above the $x$-axis. Also the lune $D\cap D'$ contains all the points in $\{p_i, \ldots, p_j\}$. This implies that $p(J)$ must lie on the right boundary arc of the lune. The bisector of $p(I)$ and $p(J)$ intersects the $x$-axis at $I\cap J$, thus the points on the $x$-axis to the left of $I\cap J$ is farther to $p(J)$ than to $p(I)$, which contradicts that $I$ is in the left of $J$.

For the second property, it suffices to show that $p_x(I) > p_x(I')$ for any two consecutive intervals $I$ and $I'$ in $\{I_1,\ldots,I_{\sigma(i,j)}\}$ where $I$ is in the left of $I'$. Consider the half-circle of the smallest disk centered at $I\cap I'$ above the $x$-axis. It passes through $p(I)$ and $p(I')$. Since the half-circle intersects with the bisector of $p(I)$ and $p(I')$ exactly once, $p(I)$ should be in the right of $p(I')$ along the half-circle. This means that $p_x(I) > p_x(I')$ because the half-circle is monotone to the $x$-axis.
\end{proof}

We now decide the position of $b$, the right endpoint of $J$. This is equivalent to finding some interval $I_l$ which will contain $b$ after $p_j$ is inserted. Then $b$ is defined as the intersection of $I_l$ with the bisector of $p_j$ and $p(I_l)$. By Lemma~\ref{lem:interval}, we can find $I_l$ by scanning the intervals from left to right and checking if it intersects with the bisector of $p_j$ and $p(I_l)$. Once $I_l$ is found, we (1) delete the intervals $I_1,\cdots, I_{l-1}$ from $T_i$ because they are completely contained in $J$, (2) insert $J$ for $p_j$ into $T_i$, and (3) replace(i.e., delete then insert) $I_l$ with $I_l\setminus J$. If some interval is removed from $T_i$, then it is never inserted again into $T_i$. Hence, for a fixed $i$, we can compute the smallest disks enclosing $\{p_i, \ldots, p_j\}$ for all $i\leq j \leq n$ in $O((n-i)\log n)$ time. In other words, we can compute $A[i]$ in $O((n-i)\log n) =O(n\log n)$ time. As a result, we can compute $A[1]$, starting from $A[n]$, in $O(n^2\log n)$ time in total. This requires $O(n)$ space. The detailed algorithm is summarized below.

\begin{algorithm}[htb]
  \caption{\textsc{MinCostAlignedCoverage}$(P, \alpha)$}
  \begin{algorithmic}[1]
    \renewcommand{\algorithmicrequire}{\textbf{Input:}}
    \REQUIRE A set $P$ of $n$ points $\{p_1,\ldots, p_n\}$ and $\alpha \geq 1$.
    \renewcommand{\algorithmicrequire}{\textbf{Output:}}
    \REQUIRE An optimal covering $\D = \{D_1,\ldots,D_k\}$ for $P$, and its coverage cost.
    \STATE $A[n+1] = 0$.
    \FOR  {$i \leftarrow n$ downto $1$}
        \STATE Initialize $T_i$ as one interval $(-\infty, +\infty)$ for $p_i$.
        \STATE Initialize $A[i] = r^\alpha(D(\{p_i\})) + A[i+1]$.
         \FOR {$j \leftarrow i+1$ upto $n$}
            \STATE Find the first interval $I_l$ such that the bisector $B$ of $p_j$ and $p(I_l)$
                intersects $I_l$ by scanning the intervals in $T_i$ from left to right.\\
            \STATE $J := [-\infty, B\cap I_l)$
            \STATE Remove intervals $I_1,\ldots, I_{l-1}$, replace $I_l$ with $I_l\setminus J$, and insert $J$ in $T_i$.
            \STATE Let $r$ be the radius stored at the roof of $T_i$, i.e., $r = r(D(\{p_i,\ldots,p_j\}))$.
            \STATE $A[i] = \min(A[i],~r^\alpha+A[j+1])$
            \STATE Record the index $j$ at which the minimum is achieved.
        \ENDFOR\ $j$
    \ENDFOR\ $i$
    \STATE Reconstruct the optimal covering $\D$ from the recorded indices.
    \RETURN $\D$ and $A[1]$
  \end{algorithmic}
\end{algorithm}

\paragraph{Algorithm for $L_\infty$ metric.} Under this metric, the disk is an axis-aligned square. As before, we consider only the leftmost optimal covering by the lexicographic order. We can easily see that Fact~\ref{fact:apex} and Fact~\ref{fact:partition} also hold for $L_\infty$ metric if the apex $t_i$ of the disk is defined as the upper and right corner of the disk. To apply Lemma~\ref{lem:partition}, we would partition $P$ by the vertical lines through the intersection of two consecutive squares; in this metric they are the vertical lines containing the right sides of the squares. But some points can be on such vertical lines, thus Fact~\ref{fact:noptsatintersect} does not hold. Instead we choose the separation lines as the vertical lines a bit right the right sides of the squares. Then we can easily prove that Lemma~\ref{lem:partition} still holds. We now compute $A[i]$ similarly. The key step is to compute the smallest square $C$ enclosing $\{p_i, \ldots, p_j\}$ quickly. This square $C$ is determined by two points; $p_j$ and the highest point of $\{p_i,\ldots, p_{j-1}\}$, which can be computed in $O(1)$ time if we maintain the highest point during the incremental evaluation. Thus we can compute $A[i]$ in $O(n)$ time. The total time is $O(n^2)$, which completes the proof of Theorem~\ref{thm:min}.

\paragraph{Algorithm for fixed $k$ optimal disks.}
We can also consider the case when the maximum number of disks used to cover $P$ is given as a fixed value $1\leq k\leq n$. This can be similarly solved by filling a two dimensional table $A[i][k]$, the minimum cost needed to cover $p_i, \ldots, p_n$ with at most $k$ disks, in $O(kn^2\log n)$ total time. Actually we can find all optimal coverings for any $1\leq k\leq n$ in the same time. This completes the proof of Theorem~\ref{thm:min_k}.

\section{Concluding Remarks}

We can consider other disk coverage problems with practical constraints such as the connectivity of the disks. Recently, Chambers et al.~\cite{chambers11} investigated a problem of assigning radii to a given set of points in the plane such that the resulting set of disks is connected and the sum of radii, i.e., $\alpha = 1$ is minimized. When we bring such connectivity constraint to our problem for $\alpha \geq 1$, we need to find a ``connected'' set of disks which optimally covers the input points. When $\alpha = 1$, the smallest disk enclosing all points is the optimal coverage. However, we can easily check for $\alpha > 1$ that there are input points such that their optimal coverings consist of infinitely many disks. Thus we should restrict the number of disks used to cover, say $1\leq k\leq n$. But we have no idea how hard this problem is for a fixed $k$. It would be challenging for small $k$ such as $k = 2, 3$.


\begin{thebibliography}{1}

\bibitem{alt06}
Helmut Alt, Esther~M. Arkin, Herv\'{e} Br\"{o}nnimann, Jeff Erickson,
  S\'{a}ndor~P. Fekete, Christian Knauer, Jonathan Lenchner, Joseph S.~B.
  Mitchell, and Kim Whittlesey.
\newblock Minimum-cost coverage of point sets by disks.
\newblock In {\em Proceedings of the twenty-second annual symposium on
  Computational geometry}, SCG '06, pages 449--458, New York, NY, USA, 2006.
  ACM.

\bibitem{chambers11}
Erin~Wolf Chambers, S\'{a}ndor~P. Fekete, Hella-Franziska Hoffmann, Dimitri
  Marinakis, Joseph S.~B. Mitchell, Venkatesh Srinivasan, Ulrike Stege, and Sue
  Whitesides.
\newblock Connecting a set of circles with minimum sum of radii.
\newblock In {\em Proceedings of the 12th international conference on
  Algorithms and data structures}, WADS'11, pages 183--194, Berlin, Heidelberg,
  2011. Springer-Verlag.

\bibitem{compgeom}
Mark de~Berg, Marc van Kreveld, Mark Overmars, and Otfried Schwarzkopf.
\newblock {\em Computational Geometry: Algorithms and Applications}.
\newblock Springer-Verlag, second edition, 2000.

\bibitem{dtlee80}
D.~T. Lee.
\newblock Two-dimensional voronoi diagrams in the $l_p$-metric.
\newblock {\em J. ACM}, 27:604--618, October 1980.

\end{thebibliography}
\end{document}